\newtheorem{theorem}{Theorem}
\newtheorem{definition}[theorem]{Definition}
\newtheorem{example}[theorem]{Example}
\newtheorem{lemma}[theorem]{Lemma}
\newtheorem{proposition}[theorem]{Proposition}
\begin{document}

\title{On Pairs of $f$-divergences and their Joint Range}
\author{Peter Harremo\"{e}s,~\IEEEmembership{Member,~IEEE,} Igor Vajda\dag
,~\IEEEmembership{Fellow ~IEEE} \thanks{Manuscript received xxxxx, 2010;
revised xxxxx. This work was supported by the European Network of Excellence
and the GA\v{C}R grants 102/07/1131 and 202/10/0618.}\thanks{Peter
Harremo\"{e}s is with Copenhagen business College, Denmark. Igor Vajda passed
away during the preparation of this article. He worked at Institute of
Information and Automation, Prague, Czech Republic.}}
\pubid{0000--0000/00\$00.00~\copyright ~2010 IEEE}
\maketitle

\begin{abstract}
We compare two $f$-divergences and prove that their joint range is the convex
hull of the joint range for distributions supported on only two points. Some
applications of this result are given.

\end{abstract}

\begin{keywords}
$f$-divergence, convexity, joint range.
\end{keywords}

\section{Divergences and divergence statistics}

\PARstart{M}{any} of the divergence measures used in statistics are of the
$f$-divergence type introduced independently by I. Csisz\'{a}r
\cite{Csiszar1963}, T. Morimoto \cite{morimoto1963}, and Ali and Silvey
\cite{Ali1966}. Such divergence measures have been studied in great detail in
\cite{Liese1987}. Often one is interested inequalities for one $f$-divergence
in terms of another $f$-divergence. Such inequalities are for instance needed
in order to calculate the relative efficiency of two $f$-divergences when used
for testing goodness of fit but there are many other applications. In this
paper we shall study the more general problem of determining the joint range
of any pair of $f$-divergences. The results are useful in determining general
conditions under which information divergence is a more efficient statistic
for testing goodness of fit than another $f$-divergence, but will not be
discussed in this short paper.

Let $f:\left(  0,\infty\right)  \rightarrow\mathbb{R}$ denote a convex
function satisfying $f\left(  1\right)  =0.$ We define $f\left(  0\right)  $
as the limit $\lim_{t\rightarrow0}f\left(  t\right)  $. We define $f^{\ast
}\left(  t\right)  =tf\left(  t^{-1}\right)  .$ Then $f^{\ast}$ is a convex
function and $f^{\ast}\left(  0\right)  $ is defined as $\lim_{t\rightarrow
0}tf\left(  t^{-1}\right)  =\lim_{t\rightarrow\infty}\frac{f\left(  t\right)
}{t}.$ Assume that $P$ and $Q$ are absolutely continuous with respect to a
measure $\mu,$ and that $p=\frac{dP}{d\mu}$ and $q=\frac{dQ}{d\mu}.$ For
arbitrary distributions $P$ and $Q$ the $f$-divergence $D_{f}(P,Q)\geq0$\ is
defined by the formula
\begin{equation}
D_{f}(P,Q)=\int_{\left\{  q>0\right\}  }f\left(  \frac{p}{q}\right)
~dQ+f^{\ast}\left(  0\right)  P\left(  q=0\right)  \label{4}%
\end{equation}
(for details about the definition (\ref{4}) and properties of the
$f$-divergences, see \cite{Liese2006}, \cite{Liese1987} or \cite{Read1988}).
With this definition
\[
D_{f}\left(  P,Q\right)  =D_{f^{\ast}}\left(  Q,P\right)  .
\]

\begin{example}
The function $f(t)=\left\vert t-1\right\vert $ defines the $L^{1}$-distance%
\begin{equation}
\left\Vert P-Q\right\Vert =\sum_{j=1}^{k}q_{j}\,\left\vert {\frac{p_{j}}%
{q_{j}}-1}\right\vert =\sum_{j=1}^{k}\,\left\vert p_{j}{-q_{j}}\right\vert
\text{ \ \ (cf. (\ref{4}))} \label{V}%
\end{equation}
which plays an important role in information theory and mathematical
statistics \cite{Barron1992, Fedotov2003} . \input{pinsker.TpX}
\end{example}

In (\ref{4}) is often taken the convex function $f$ which is one of the power
functions $\phi_{\alpha}$\ of order $\alpha\in\mathbb{R}$ given in the domain
$t>0$ by the formula
\begin{equation}
\phi_{\alpha}(t)={\frac{t^{\alpha}-\alpha(t-1)-1}{\alpha(\alpha-1)}}\text{
\ \ \ when \ }\alpha(\alpha-1)\neq0 \label{4a}%
\end{equation}
and by the corresponding limits
\begin{equation}
\phi_{0}(t)=-\ln t+t-1\text{ \ \ and \ \ }\phi_{1}(t)=t\ln t-t+1. \label{4b}%
\end{equation}
The $\phi$-divergences
\begin{equation}
D_{\alpha}(P,Q)\overset{def}{=}D_{\phi_{\alpha}}(P,Q),\text{ \ \ }\alpha
\in\mathbb{R} \label{4c}%
\end{equation}
based on (\ref{4a}) and (\ref{4b}) are usually referred to as power
divergences of orders $\alpha.$ For details about the properties of power
divergences, see \cite{Liese2006} or \cite{Read1988}. Next we\ mention the
best known members of the family of statistics (\ref{4c}), with a reference to
the skew symmetry $D_{\alpha}(P,Q)=D_{1-\alpha}(Q,P)$ of the power divergences
(\ref{4c}).$\medskip$

\begin{example}
The $\chi^{2}$-divergence (or quadratic divergence or Pearson divergence)
\begin{equation}
D_{2}(P,Q)=D_{-1}(Q,P)={\frac{1}{2}}\sum_{j=1}^{k}{\frac{(p_{j}-q_{j})^{2}%
}{q_{j}}} \label{chi}%
\end{equation}
leads to the well known Pearson and Neyman statistics. The information
divergence
\begin{equation}
D_{1}(P,Q)=D_{0}(Q,P)=\sum_{j=1}^{k}p_{j}\ln{\frac{p_{j}}{q_{j}}} \label{7}%
\end{equation}
leads to the log-likelihood ratio and reversed log-likelihood ratio
statistics. The symmetric Hellinger divergence
\[
D_{1/2}(P,Q)=D_{1/2}(Q,P)=H(P,Q)
\]
leads to the Freeman--Tukey statistic.
\end{example}

\begin{example}
The Hellinger divergence and the total variation are symmetric in the
arguments $P$ and $Q.$ Non-symmetric divergences may be symmetrized. For
instance the LeCam divergence is nothing but the symmetrized $\chi^{2}%
$-divergence given by
\[
D_{LeCam}\left(  P,Q\right)  =\frac{1}{2}D_{2}\left(  P,\frac{P+Q}{2}\right)
+\frac{1}{2}D_{2}\left(  Q,\frac{P+Q}{2}\right)
\]
Another symmetrized divergence is the Jensen Shannon divergence defined by
\[
JD_{1}\left(  P,Q\right)  =\frac{1}{2}D\left(  P\left\Vert \frac{P+Q}%
{2}\right.  \right)  +\frac{1}{2}D\left(  Q\left\Vert \frac{P+Q}{2}\right.
\right)  .
\]
The joint range of total variation with Jensen Shannon divergence was studied
by Bri\"{e}t and Harremo\"{e}s \cite{Briet2009} and is illustrated on Figure
\ref{vsjd}.

\input{vsjd.TpX}
\end{example}

In this paper we shall prove that the joint range of any pair of
$f$-divergences is essentially determined by the range of distributions on a
two-element set. In special cases the significance of determining the range
over two-element set has been pointed out explicitly in \cite{Topsoe2001a}.
Here we shall prove that a reduction to two-element sets can always be made.

\section{\label{sec1}Joint range of $f$-divergences}

In this section we are interested in the range of the map $\left(  P,Q\right)
\rightarrow\left(  D_{f}\left(  P,Q\right)  ,D_{g}\left(  P,Q\right)  \right)
$ where $P$ and $Q$ are probability distributions on the same set.

\begin{definition}
A point $\left(  x,y\right)  \in\mathbb{R}^{2}$ is $(f,g)$\emph{-achievable}
if there exist probability measures $P$ and $Q$ on a $\sigma$-algebra such
$\left(  x,y\right)  =\left(  D_{f}\left(  P,Q\right)  ,D_{g}\left(
P,Q\right)  \right)  .$ A $(f,g)$-divergence pair $\left(  x,y\right)  $ is
$d$-\emph{achievable } if there exist probability vectors $P,Q\in
\mathbb{R}^{d}$ such that
\[
\left(  x,y\right)  =\left(  D_{f}\left(  P,Q\right)  ,D_{g}\left(
P,Q\right)  \right)  .
\]

\end{definition}

\begin{lemma}
Assume that
\[
P_{0}\left(  A\right)  =Q_{0}\left(  A\right)  =1
\]
and
\[
P_{1}\left(  B\right)  =Q_{1}\left(  B\right)  =1
\]
and that $A\cap B=\varnothing.$ If $P_{\alpha}=\left(  1-\alpha\right)
P_{0}+\alpha P_{1}$ and $Q_{\alpha}=\left(  1-\alpha\right)  Q_{0}+\alpha
Q_{1} $ then
\[
D_{f}\left(  P_{\alpha},Q_{\alpha}\right)  =\left(  1-\alpha\right)
D_{f}\left(  P_{0},Q_{0}\right)  +\alpha D_{f}\left(  P_{1},Q_{1}\right)  .
\]

\end{lemma}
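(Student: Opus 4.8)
The plan is to reduce everything to the defining formula (\ref{4}) by working with a single dominating measure and exploiting the disjointness of the supports. First I would fix a $\sigma$-finite measure $\mu$ dominating all four of $P_{0},Q_{0},P_{1},Q_{1}$ (for instance $\mu=P_{0}+Q_{0}+P_{1}+Q_{1}$) and set $p_{i}=dP_{i}/d\mu$, $q_{i}=dQ_{i}/d\mu$. Since $A\cap B=\varnothing$ while $P_{0},Q_{0}$ are carried by $A$ and $P_{1},Q_{1}$ by $B$, we have $p_{1}=q_{1}=0$ $\mu$-a.e.\ on $A$, $p_{0}=q_{0}=0$ $\mu$-a.e.\ on $B$, and all four densities vanish $\mu$-a.e.\ off $A\cup B$. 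Consequently $p_{\alpha}:=dP_{\alpha}/d\mu=(1-\alpha)p_{0}$ and $q_{\alpha}:=dQ_{\alpha}/d\mu=(1-\alpha)q_{0}$ $\mu$-a.e.\ on $A$, $p_{\alpha}=\alpha p_{1}$ and $q_{\alpha}=\alpha q_{1}$ $\mu$-a.e.\ on $B$, and $p_{\alpha}=q_{\alpha}=0$ $\mu$-a.e.\ elsewhere.

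Next I would split the two terms of (\ref{4}) for $D_{f}(P_{\alpha},Q_{\alpha})$ over $A$, over $B$, and over $(A\cup B)^{c}$; the last region contributes nothing, since $P_{\alpha}$ and $Q_{\alpha}$ both assign it zero mass. Assume first $0<\alpha<1$. On $A$ the ratio $p_{\alpha}/q_{\alpha}$ equals $p_{0}/q_{0}$ wherever $q_{\alpha}>0$ (the factor $1-\alpha$ cancels), the set $\{q_{\alpha}=0\}\cap A$ equals $\{q_{0}=0\}\cap A$, and on that set $P_{\alpha}=(1-\alpha)P_{0}$; using $P_{0}(A)=Q_{0}(A)=1$, the contribution coming from $A$ is therefore
\[
(1-\alpha)\Bigl(\int_{\{q_{0}>0\}}f\!\left(\tfrac{p_{0}}{q_{0}}\right)dQ_{0}+f^{\ast}(0)\,P_{0}(q_{0}=0)\Bigr)=(1-\alpha)\,D_{f}(P_{0},Q_{0}).
\]
The same argument with $\alpha$ in place of $1-\alpha$ shows the contribution from $B$ equals $\alpha\,D_{f}(P_{1},Q_{1})$, and summing the two pieces yields the asserted identity.

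Finally I would dispose of the degenerate cases $\alpha=0$ and $\alpha=1$, which are immediate: then $P_{\alpha},Q_{\alpha}$ coincide with $P_{0},Q_{0}$ (resp.\ $P_{1},Q_{1}$), the identity collapses to a tautology, and the vanishing coefficient simply deletes the other term, with the convention $0\cdot\infty=0$ covering the case $f^{\ast}(0)=\infty$. The only point requiring care is the bookkeeping of the singular term $f^{\ast}(0)P(q=0)$ together with the fact that all these manipulations hold only $\mu$-almost everywhere; but there is no genuine obstacle, because disjoint supports force the Radon--Nikodym derivatives of the two mixtures to split exactly along $A$ and $B$, leaving nothing but the linearity of the integral to finish the argument.
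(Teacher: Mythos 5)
Your argument is correct. Note, however, that the paper states this lemma without any proof at all (it passes directly from the lemma to Theorem \ref{TheoremConvex}, where the identity is used componentwise), so there is no proof of record to compare against; your write-up supplies the verification the authors evidently regarded as routine. The key points are all present and sound: a common dominating measure such as $\mu=P_{0}+Q_{0}+P_{1}+Q_{1}$, the observation that disjointness of $A$ and $B$ forces the densities of the mixtures to split as $(1-\alpha)p_{0},(1-\alpha)q_{0}$ on $A$ and $\alpha p_{1},\alpha q_{1}$ on $B$, the cancellation of the mixture weight inside $f(p_{\alpha}/q_{\alpha})$, the identification $\{q_{\alpha}=0\}\cap A=\{q_{0}=0\}\cap A$ for the singular term $f^{\ast}(0)P(q=0)$ in (\ref{4}), and the conventions handling $\alpha\in\{0,1\}$ and $f^{\ast}(0)=\infty$. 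The only tacit ingredient you might have flagged explicitly is that $D_{f}$ does not depend on the choice of dominating measure, which justifies evaluating $D_{f}(P_{0},Q_{0})$ and $D_{f}(P_{1},Q_{1})$ with respect to the global $\mu$; this is standard and does not affect the validity of the proof.
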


\begin{theorem}
\label{TheoremConvex}The set of $(f,g)$-achievable points is convex.
\end{theorem}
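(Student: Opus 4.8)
The plan is to show that any two $(f,g)$-achievable points can be connected through a third achievable point obtained by ``placing the two witnessing pairs on disjoint supports,'' exploiting the affine mixing behaviour recorded in the preceding Lemma. Concretely, suppose $(x_0,y_0)$ is witnessed by $(P_0,Q_0)$ and $(x_1,y_1)$ by $(P_1,Q_1)$. First I would realise $P_0,Q_0$ on some set $A$ and $P_1,Q_1$ on a set $B$ disjoint from $A$ — this is harmless since $f$-divergences depend only on the pair of measures, not on the ambient space, so we may transport each pair onto a copy living on disjoint carriers (e.g.\ take the disjoint union of the two underlying measurable spaces). Then for each $\alpha\in[0,1]$ form the mixtures $P_\alpha=(1-\alpha)P_0+\alpha P_1$ and $Q_\alpha=(1-\alpha)Q_0+\alpha Q_1$, which are again probability measures.

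Next I would apply the Lemma twice, once with the convex function $f$ and once with $g$, using exactly the disjointness hypothesis $A\cap B=\varnothing$ that the Lemma requires. This yields
\[
D_f(P_\alpha,Q_\alpha)=(1-\alpha)x_0+\alpha x_1,\qquad D_g(P_\alpha,Q_\alpha)=(1-\alpha)y_0+\alpha y_1,
\]
so the point $(1-\alpha)(x_0,y_0)+\alpha(x_1,y_1)$ is $(f,g)$-achievable, witnessed by $(P_\alpha,Q_\alpha)$. Since the two chosen points and the mixing parameter were arbitrary, the whole segment between any pair of achievable points is achievable, which is precisely convexity of the achievable set.

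The only genuine subtlety — and the step I would be most careful about — is the reduction to disjoint supports: one must make sure that moving $(P_i,Q_i)$ onto a disjoint carrier does not change $(D_f(P_i,Q_i),D_g(P_i,Q_i))$. This is immediate from the integral formula \eqref{4}, because the value depends only on the Radon--Nikodym derivative $p/q$ and the mass $P(q=0)$, both of which are preserved under a measure-space isomorphism (and, if one prefers to work with a common dominating measure, one can simply take $\mu_0$ on $A$, $\mu_1$ on $B$, and $\mu=\mu_0+\mu_1$ on $A\cup B$). Once that bookkeeping is accepted, everything else is a direct invocation of the Lemma, and there is no remaining obstacle; the argument does not even need $P$ and $Q$ to be finitely supported, so it covers general $\sigma$-algebras as in the Definition.
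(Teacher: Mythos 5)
Your argument is correct and is essentially the paper's own proof: the paper likewise disjointifies the two witnessing pairs by embedding them into the product $\mathcal{X}\times\{0,1\}$ (a disjoint union of two copies of the space), forms the $(1-\alpha,\alpha)$-mixtures, and invokes the preceding Lemma for $f$ and $g$ separately. Your explicit remark that transporting each pair onto a disjoint carrier preserves the divergence values is the same bookkeeping the paper handles implicitly via the conditional-distribution construction.
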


\begin{proof}
Assume that $\left(  P,Q\right)  $ and $\left(  \tilde{P},\tilde{Q}\right)  $
are two pairs of probability distributions on a space $\left(  \mathcal{X}
,\mathcal{F}\right)  .$ Introduce a two-element set $B=\left\{  0,1\right\}  $
and the product space $\mathcal{X\times}B$ as a measurable space. Let $\phi$
denote projection on $B.$ Now we define a pair $\left(  \tilde{P},\tilde
{Q}\right)  $of joint distribution on $\mathcal{X\times}B.$ The marginal
distribution of both $\tilde{P}$ is $\tilde{Q}$ on $B$ is $\left(
1-\alpha,\alpha\right)  .$ The conditional distributions are given by
$P\left(  \cdot\mid\phi=i\right)  =P_{i}$ and $Q\left(  \cdot\mid
\phi=i\right)  =Q_{i}$ where $i=0,1.$ Then
\begin{multline*}
\left(
\begin{array}
[c]{c}%
D_{f}\left(  P_{\alpha},Q_{\alpha}\right) \\
D_{g}\left(  P_{\alpha},Q_{\alpha}\right)
\end{array}
\right)  =\\
\left(
\begin{array}
[c]{c}%
\left(  1-\alpha\right)  D_{f}\left(  P_{0},Q_{0}\right)  +\alpha D_{f}\left(
P_{1},Q_{1}\right) \\
\left(  1-\alpha\right)  D_{g}\left(  P_{0},Q_{0}\right)  +\alpha D_{g}\left(
P_{1},Q_{1}\right)
\end{array}
\right) \\
=\left(  1-\alpha\right)  \left(
\begin{array}
[c]{c}%
D_{f}\left(  P_{0},Q_{0}\right) \\
D_{g}\left(  P_{0},Q_{0}\right)
\end{array}
\right)  +\alpha\left(
\begin{array}
[c]{c}%
D_{f}\left(  P_{1},Q_{1}\right) \\
D_{g}\left(  P_{1},Q_{1}\right)
\end{array}
\right) \\
=\left(  1-\alpha\right)  \left(
\begin{array}
[c]{c}%
D_{f}\left(  P,Q\right) \\
D_{g}\left(  P,Q\right)
\end{array}
\right)  +\alpha\left(
\begin{array}
[c]{c}%
D_{f}\left(  \tilde{P},\tilde{Q}\right) \\
D_{g}\left(  \tilde{P},\tilde{Q}\right)
\end{array}
\right)  .
\end{multline*}

\end{proof}

\begin{example}
For the joint range of total variation and Jensen Shannon divergence
illustrated on Figure \ref{vsjd} the set of 2-achievable points is not convex
but the set of 3-achievable points is convex and equals the set of all
$(f,g)$-achievable points.
\end{example}

\begin{theorem}
Any $(f,g)$-achievable points is a convex combination of two $2$-achievable
points. Consequently, any $(f,g)$-achievable point is $4$-achievable.
\end{theorem}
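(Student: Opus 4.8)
The plan is to encode the pair $\bigl(D_f(P,Q),D_g(P,Q)\bigr)$ as an integral against a measure on $[0,1]$, cut that measure down to three atoms, and then split the three atoms into two two-atom pieces. First I would use the elementary fact that, writing $\mu=P+Q$ and $p=\tfrac{dP}{d\mu}\in[0,1]$, one has $D_f(P,Q)=\int_{\mathcal X}\psi_f(p)\,d\mu$ with $\psi_f(t)=(1-t)f\!\bigl(\tfrac{t}{1-t}\bigr)$ for $t<1$ and $\psi_f(1)=f^{\ast}(0)$, and likewise for $g$. Pushing $\mu$ forward along $p$ yields a finite nonnegative measure $\nu$ on $[0,1]$ with $\nu([0,1])=\mu(\mathcal X)=2$ and barycenter $\int_{[0,1]}\ell\,d\nu=P(\mathcal X)=1$, and with $D_f(P,Q)=\int_{[0,1]}\psi_f\,d\nu$, $D_g(P,Q)=\int_{[0,1]}\psi_g\,d\nu$. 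In these terms a $2$-achievable point is exactly such a pair of integrals for a $\nu$ supported on at most two points (a two-point pair $(P',Q')$ corresponds bijectively to a $\nu'=\mu_1\delta_a+\mu_2\delta_b$ with $\mu_1+\mu_2=2$ and $\mu_1a+\mu_2b=1$).

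Next I would cut $\nu$ down to three atoms. The curve $\Gamma=\{\,(\psi_f(\ell),\psi_g(\ell),\ell):\ell\in[0,1]\,\}\subset\mathbb R^3$ is connected, and since $\tfrac12\nu$ is a probability measure the point $\tfrac12\bigl(D_f(P,Q),D_g(P,Q),1\bigr)$ is the barycenter of $\Gamma$ under that measure, hence lies in $\operatorname{conv}\Gamma$. By Carathéodory's theorem together with connectedness of $\Gamma$ (the Fenchel--Bunt sharpening, which for a connected set in $\mathbb R^{3}$ gives convex combinations of at most three points), this point is a convex combination of at most three points of $\Gamma$; multiplying through by $2$ produces a measure $\nu_{\ast}=\sum_{i=1}^{3}\mu_i\delta_{\ell_i}$ with $\sum\mu_i=2$, $\sum\mu_i\ell_i=1$, and the same values of $\int\psi_f$ and $\int\psi_g$ as $\nu$.

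Then I would split these three atoms into two balanced pairs. Order $\ell_1\le\ell_2\le\ell_3$; the barycenter constraint forces $\ell_1\le\tfrac12\le\ell_3$. One writes $\nu_{\ast}=\nu'+\nu''$, where $\nu'$ and $\nu''$ are each supported on at most two of the $\ell_i$ (the mass of one atom being split between them), chosen --- by a short case analysis according to whether $\ell_2$ lies below or above $\tfrac12$ --- so that each of $\nu',\nu''$ again has barycenter equal to one half of its total mass, with all masses nonnegative. Rescaling $\nu'$ and $\nu''$ to total mass $2$ turns them into the standard measures of two $2$-point pairs $(P',Q')$ and $(P'',Q'')$, and with $\lambda=\tfrac12\,\nu'([0,1])\in[0,1]$ one obtains
\[
\bigl(D_f(P,Q),D_g(P,Q)\bigr)=\lambda\bigl(D_f(P',Q'),D_g(P',Q')\bigr)+(1-\lambda)\bigl(D_f(P'',Q''),D_g(P'',Q'')\bigr),
\]
a convex combination of two $2$-achievable points. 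Finally, to deduce $4$-achievability I would invoke the Lemma: place $(P',Q')$ on a two-element set $A$ and $(P'',Q'')$ on a disjoint two-element set $B$, and form $P_\lambda=\lambda P'+(1-\lambda)P''$ and $Q_\lambda=\lambda Q'+(1-\lambda)Q''$ on $A\sqcup B$; applying the Lemma to $f$ and to $g$ gives $\bigl(D_f(P_\lambda,Q_\lambda),D_g(P_\lambda,Q_\lambda)\bigr)=\bigl(D_f(P,Q),D_g(P,Q)\bigr)$, and $\lvert A\sqcup B\rvert=4$.

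I expect the main obstacle to be the bookkeeping around infinite divergence values: the barycenter-of-a-curve identity and the Fenchel--Bunt step want $\Gamma$ compact and real-valued, so the atoms of $\nu$ at $\ell\in\{0,1\}$ --- where $D_f$ or $D_g$ may be $+\infty$ (this happens precisely when $f^{\ast}(0)=\infty$ or $f(0)=\infty$) --- must be peeled off and handled separately by an elementary argument. The passage to $[0,1]$ and the three-atom splitting are otherwise routine, the latter needing only a little care in the degenerate cases where an atom has zero mass or sits exactly at $\tfrac12$.
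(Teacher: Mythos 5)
Your argument is correct, and it takes a genuinely different route from the paper's at the two decisive steps. The initial reduction is essentially the same idea in different coordinates: the paper records the law of $X=p/q$ under $Q$ on $[0,\infty)$ subject to $\mathrm{E}[X]\le 1$, while your symmetrized coordinate $dP/d(P+Q)$ on $[0,1]$ with total mass $2$ and barycenter $1$ is equivalent bookkeeping that absorbs the singular part more cleanly. From there the paper proceeds via the Choquet--Bishop--de Leeuw theorem, an analysis identifying the extreme points of the constraint set as one- or two-point measures, Carath\'eodory's theorem to obtain three $2$-achievable points, and finally a topological argument (a curve of $2$-achievable points joining $\mathbf{y}_1$ to $\mathbf{y}_2$ must cross one of the half-lines through $\mathbf{y}$) to cut three down to two. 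You replace the first three of these by a single application of Fenchel--Bunt in $\mathbb{R}^3$, encoding the normalization constraints as a third coordinate so that the three-atom representation falls out of the barycenter lying in $\operatorname{conv}\Gamma$; and you replace the topological step by an explicit splitting of the three atoms into two two-atom measures with normalized barycenter $\tfrac12$ (your case analysis is sound: with $\ell_1\le\ell_2\le\ell_3$ and $\ell_2\le\tfrac12$, transferring mass $t=\mu_2(\tfrac12-\ell_2)/(\ell_3-\tfrac12)\le\mu_3$ off the third atom works, and symmetrically otherwise). Your route avoids Choquet theory and any appeal to curve-connectedness of the $2$-achievable set in the $(D_f,D_g)$-plane; the price is that you must justify that the barycenter of a possibly non-compactly supported probability measure with finite first moments lies in the convex hull, not merely the closed convex hull, of the finite part of $\Gamma$ (true, by the standard supporting-hyperplane induction, but worth a line), and, as you note, the endpoints where $\psi_f$ or $\psi_g$ is infinite must be excised --- if both divergences are finite, $\nu$ gives them mass zero and the remaining arc is still connected, so Fenchel--Bunt still applies. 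The final passage to $4$-achievability via the disjoint-union Lemma coincides with the paper's.
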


\begin{proof}
Let $P$ and $Q$ denote probability measures on Borel space. Define the set
$A=\left\{  q>0\right\}  $ and the function $X=p/q$ on $A.$ Then $Q$
satisfies
\begin{align}
Q\left(  A\right)   &  =1,\label{norm}\\
\int_{A}X~dQ  &  \leq1.\nonumber
\end{align}
Now we fix $X$ and $A.$ The formulas for the divergences become
\begin{align*}
D_{f}\left(  P,Q\right)   &  =\int_{A}f\left(  X\right)  ~dQ+f^{\ast}\left(
0\right)  P\left(  \complement A\right) \\
&  =\int_{A}f\left(  X\right)  ~dQ+f^{\ast}\left(  0\right)  \left(
1-\int_{A}X~dQ\right) \\
&  =\int_{A}\left(  f\left(  X\right)  ~+f^{\ast}\left(  0\right)  \left(
1-X\right)  \right)  ~dQ\\
&  =\mathrm{E}\left[  f\left(  X\right)  +f^{\ast}\left(  0\right)  \left(
1-X\right)  \right]
\end{align*}
and similarly
\[
D_{g}\left(  P,Q\right)  =\mathrm{E}\left[  g\left(  X\right)  ~+g^{\ast
}\left(  0\right)  \left(  1-X\right)  \right]  .
\]
Hence, the divergences only depend on the distribution of $X.$ Therefore we
may without loss of generality assume that $Q$ is a probability measure on
$\left[  0,\infty\right)  $.

Define $C$ as the set of probability measures on $\left[  0,\infty\right)  $
satisfying $\mathrm{E}\left[  X\right]  \leq1.$ Let $C^{+}$ be the set of
additive measures $\mu$ on $\left[  0,\infty\right)  $ satisfying $\mu\left(
A\right)  \leq1$ and $\int_{A}X~d\mu\leq1.$ Then $C^{+}$ is convex and thus
compact under setwise convergence. According to the Choquet--Bishop--de Leeuw
theorem \cite[Sec. 4]{Phelps2001} any other point in $C^{+}$ is the barycenter
of a probability measure over such extreme points. In particular an element
$Q\in C$ is the barycenter of a probability measure $P_{bary}$ over extreme
points of $C^{+}$ and these extreme points must in addition be probability
measures with $P_{bary}$-probability 1. Hence $Q\in C$ is a barycenter of a
probability measure over extreme points in $C.$

Let $Q$ be an element in $C.$ Let $A_{i},i=1,2,3$ be a disjoint cover of
$\left[  0,\infty\right)  $ and assume that $Q\left(  A_{i}\right)  >0.$ Then
\[
Q=\sum_{i=1}^{3}Q\left(  A_{i}\right)  Q\left(  \cdot\mid A_{i}\right)  .
\]
For a probability vector $\lambda=\left(  \lambda_{1},\lambda_{2},\lambda
_{2}\right)  $ let $Q_{\lambda}$ denote the distribution
\[
Q_{\lambda}=\sum_{i=1}^{3}\lambda_{i}Q\left(  \cdot\mid A_{i}\right)  .
\]
Then $Q_{\lambda}$ is element in $C$ if and only if
\begin{equation}
\sum_{i=1}^{3}\lambda_{i}\int_{A}X~dQ\left(  \cdot\mid A_{i}\right)  \leq1.
\label{reduceret}%
\end{equation}
An extreme probability vector $\lambda$ that satisfies (\ref{reduceret}) has
one or two of its weights equal to 0. Hence, if $Q$ is extreme in $C$ and
$A_{i},i=1,2,3$ is a disjoint cover of $A,$ then at least one of the three
sets satisfies $Q\left(  A_{i}\right)  =0.$ Therefore an extreme point $Q\in
C$ is of one of the following two types:

\begin{enumerate}
\item $Q$ is concentrated in one point.

\item $Q$ has support on two points. In this case the inequality $\int
_{A}X~dQ\leq1$ holds with equality and $P\left(  A\right)  =1$ so that $P$ is
absolutely continuous with respect to $Q$ and therefore supported by the same
two-element set.
\end{enumerate}

The formulas for divergence are linear in $Q.$ Hence any $(f,g)$-divergence
pair is a the barycenter of a probability measure $P_{bary}$ over points
generated by extreme distributions $Q\in C.$ The extreme distributions of type
$2$ generate 2-achievable points.

For extreme points $Q$ concentrated in a single point we can reverse the
argument at make a barycentric decomposition with respect to $P$. If an
extreme $P$ has a two-point support then $Q$ is absolutely continuous with
respect to $P$ and generates a $(f,g)$-achievable point that is $2$%
-achievable. If $P$ is concentrated in a point then this point may either be
identical with the support of $Q$ and the two probability measures are
identical, or the support points are different and $P$ and $Q$ are singular
but still $\left(  P,Q\right)  $ is supported on two points. Therefore any
$(f,g)$-achievable point has a barycentric decomposition into 2-achievable points.

\input{trekant.TpX}

Let $\mathbf{y}=\left(  y,z\right)  $ be a $(f,g)$-achievable point. As we
have seen $\mathbf{y}$ is a barycenter of $(f,g)$-achievable points that are
2-achievable. According to the Carath\'{e}odory's theorem
\cite{Boltyanski2001} any barycentric decomposition in two dimensions may be
obtained as a convex combination of at most three points $\mathbf{y}%
_{i},~i=1,2,3.$ as illustrated in Figure \ref{trekant}. Assume that all three
points have positive weight. Let $\ell_{i}$ be the line through $\mathbf{y}$
and $\mathbf{y}_{i}.$ The point $\mathbf{y}$ divides the line $\ell_{i}$ in
two half-lines $\ell_{i}^{+}$ and $\ell_{i}^{-}~,$ where $\ell_{i}^{-}$
denotes the half-line that contains $\mathbf{y}_{i}.$ The lines $\ell_{i}%
^{+},i=1,2,3$ divide $\mathbb{R}^{2}$ into three sectors, each of them
containing one of the points $\mathbf{y}_{i},i=1,2,3.$ The set of
$(f,g)$-divergence pairs that are $3$-achievable is curve-connected so there
exist a continuous curve of $(f,g)$-divergence pairs that are 2-achievable
from $\mathbf{y}_{1}$ to $\mathbf{y}_{2}$ that must intersect $\ell_{1}%
^{+}\cup\ell_{3}^{+}$ in a point $\mathbf{z}.$ If $\mathbf{z}$ lies on
$\ell_{i}^{+}$ then $\mathbf{y}$ is a convex combination of the two points
$\mathbf{y}_{i}$ and $\mathbf{z}.$ Hence, any $(f,g)$-divergence pair is a
convex combination of two points that are $2$-achievable. From the
construction in the proof of Theorem \ref{TheoremConvex} we see that any
$(f,g)$-divergence pair is 4-achievable.

An $f$-divergence on an arbitrary $\sigma$-algebra can be approximated by the
$f$-divergence on its finite sub-algebras. Any finite $\sigma$-algebra is a
Borel $\sigma$-algebra for a discrete space so for probability measures $P,Q$
on a $\sigma$-algebra the point $\left(  D_{f}\left(  P,Q\right)
,D_{g}\left(  P,Q\right)  \right)  $ is in the closure of 4-achievable points.
For any function pairs $(f,g)$ the intersection of the set of 2-achievable
points and the first quadrant is closed. 4-achievable points are convex
combinations of 2-achievable points so the intersection of the 4-achievable
points and the first quadrant is closed contains $\left(  D_{f}\left(
P,Q\right)  ,D_{g}\left(  P,Q\right)  \right)  $ even if $P,Q$ are measures on
a non-atomic $\sigma$-algebra.
\end{proof}

The set of $(f,g)$-achievable points that are 2-achievable can be parametrized
as $P=\left(  1-p,p\right)  $ and $Q=\left(  1-q,q\right)  .$ If we define
$\overline{\left(  1-p,p\right)  }=\left(  p,1-p\right)  $ then $D_{f}\left(
P,Q\right)  =D_{f}\left(  \overline{P},\overline{Q}\right)  .$ Hence we may
assume without loss of generality assume that $p\leq q$ and just have to
determine the image of the simplex $\Delta=\left\{  \left(  p,q\right)
\mid0\leq p\leq q\leq1\right\}  .$ This result makes it very easy to make a
numerical plot of the $(f,g)$-achievable point is 2-achievable and the joint
range is just the convex hull.

\input{simplex.TpX}

\section{Image of the triangle}

In order to determine the image of the triangle $\Delta$ we have to check what
happens at inner points and what happens at or near the boundary. Most inner
points are mapped into inner points of the range. On subsets of $\Delta$ where
the derivative matrix is non-singular the mapping $\left(  P,Q\right)
\rightarrow\left(  D_{f},D_{g}\right)  $ is open according to the open mapping
theorem from calculus. Hence, all inner points that are not mapped into
interior points of the range must satisfy
\[
\left\vert
\begin{array}
[c]{cc}%
\frac{\partial D_{f}}{\partial p} & \frac{\partial D_{g}}{\partial p}\\
\frac{\partial D_{f}}{\partial q} & \frac{\partial D_{g}}{\partial q}%
\end{array}
\right\vert =0.
\]
Depending on functions $f$ and $g$ this equation may be easy or difficult to
solve, but in most cases the solutions will lie on a 1-dimensional manifold
that will cut the triangle $\Delta$ into pieces, such that each piece is
mapped isomorphically into subsets of the range of $\left(  P,Q\right)
\rightarrow\left(  D_{f},D_{g}\right)  .$ Each pair of functions $(f,g)$ will
require its own analysis.

The diagonal $p=q$ in $\Delta$ is easy to analyze. It is mapped into $\left(
D_{f},D_{g}\right)  =\left(  0,0\right)  .$

\begin{lemma}
\label{uendelig}If $f\left(  0\right)  =\infty,$ and $\lim_{t\rightarrow0}%
\inf\frac{g\left(  t\right)  }{f\left(  t\right)  }=\beta_{0},$ then the
supremum of
\[
\beta\cdot D_{f}\left(  P,Q\right)  -D_{g}\left(  P,Q\right)
\]
over all distributions $P,Q$ is $\infty$ if $\beta>\beta_{0}.$

If $f^{\ast}\left(  0\right)  =\infty,$ and $\lim_{t\rightarrow\infty}
\inf\frac{g\left(  t\right)  }{f\left(  t\right)  }=\beta_{0},$ then the
supremum of
\[
\beta\cdot D_{f}\left(  P,Q\right)  -D_{g}\left(  P,Q\right)
\]
over all distributions $P,Q$ is $\infty$ if $\beta>\beta_{0}.$
\end{lemma}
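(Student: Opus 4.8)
The plan is to prove both halves by exhibiting, for a given $\beta>\beta_0$, an explicit sequence of two-point distributions $(P_n,Q_n)$ along which $\beta D_f(P_n,Q_n)-D_g(P_n,Q_n)\to\infty$; since the supremum is over all $P,Q$, producing one such sequence suffices, and by the reduction to two-element supports established above it is natural to look for it there. In both halves the first move is identical: fix an auxiliary constant $\beta'$ with $\beta_0<\beta'<\beta$, and use the definition of $\liminf$ to choose a sequence $(t_n)$ approaching the relevant endpoint along which $g(t_n)/f(t_n)\to\beta_0$; for $n$ large this yields the linear bound $g(t_n)<\beta'f(t_n)$, while the hypothesis $f(0)=\infty$ (resp.\ $f^{\ast}(0)=\lim_{t\to\infty}f(t)/t=\infty$) forces $f(t_n)\to\infty$ (resp.\ $f(t_n)/t_n\to\infty$), so in particular $f(t_n)>0$ eventually.

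For the first statement I would take $t_n\to 0$, let $Q_n$ be uniform on a two-point set and $P_n=\bigl(\tfrac{t_n}{2},\,1-\tfrac{t_n}{2}\bigr)$, so that the likelihood ratio equals $t_n$ on the first atom and $2-t_n$ on the second. Plugging into the two-point form of (\ref{4}),
\[
\beta D_f(P_n,Q_n)-D_g(P_n,Q_n)=\tfrac{1}{2}\bigl(\beta f(t_n)-g(t_n)\bigr)+\tfrac{1}{2}\bigl(\beta f(2-t_n)-g(2-t_n)\bigr),
\]
where the second bracket stays bounded because $2-t_n$ lies in a compact subinterval of $(0,\infty)$ on which $f,g$ are finite and continuous, while $\beta f(t_n)-g(t_n)\ge(\beta-\beta')f(t_n)\to\infty$. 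For the second statement I would instead take $t_n\to\infty$, set $P_n=\bigl(\tfrac{1}{2},\tfrac{1}{2}\bigr)$ and $Q_n=\bigl(\tfrac{1}{2t_n},\,1-\tfrac{1}{2t_n}\bigr)$, so the likelihood ratio is $t_n$ on the first atom and tends to $\tfrac12$ on the second; then the bad atom contributes
\[
\tfrac{1}{2t_n}\bigl(\beta f(t_n)-g(t_n)\bigr)=\tfrac{1}{2}\,\frac{f(t_n)}{t_n}\Bigl(\beta-\frac{g(t_n)}{f(t_n)}\Bigr)\ge\tfrac{1}{2}(\beta-\beta')\,\frac{f(t_n)}{t_n}\longrightarrow\infty,
\]
while the other atom again contributes a bounded amount.

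The one genuine obstacle is the possibility that $g$ itself diverges at the endpoint in question, so that $D_g(P_n,Q_n)\to\infty$ as well and one is facing an $\infty-\infty$ indeterminacy; the interposed $\beta'\in(\beta_0,\beta)$ is precisely the device that resolves it, converting the qualitative $\liminf$ hypothesis into the quantitative bound $g(t_n)\le\beta'f(t_n)$ along the chosen sequence, after which the surviving term $(\beta-\beta')f(t_n)$ (respectively $(\beta-\beta')f(t_n)/t_n$) wins. Note that this mechanism never uses $\beta>0$, so the conclusion holds for every $\beta>\beta_0$ regardless of sign. The remaining points are routine: that $P_n,Q_n$ are honest probability vectors for $n$ large, that the "good" atom's contribution is bounded because the relevant ratios stay in a compact subset of $(0,\infty)$, and that each $D_f(P_n,Q_n)$ and $D_g(P_n,Q_n)$ is finite for finite $n$, so that the only limiting subtlety is the one just addressed.
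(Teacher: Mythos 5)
Your proof is correct and follows essentially the same route as the paper: you test against the same two-point pairs $P=\bigl(\tfrac{t_n}{2},1-\tfrac{t_n}{2}\bigr)$, $Q=\bigl(\tfrac12,\tfrac12\bigr)$ along a sequence realizing the $\liminf$, with the interposed $\beta'$ merely making explicit the paper's observation that $D_g/D_f\to\beta_0<\beta$ while $D_f\to\infty$. The only cosmetic difference is that you compute the second half directly with the swapped pair, whereas the paper obtains it from the first half via the duality $D_f(P,Q)=D_{f^{\ast}}(Q,P)$ and the identity $\liminf_{t\to0}g^{\ast}(t)/f^{\ast}(t)=\liminf_{t\to\infty}g(t)/f(t)$; the witness distributions are the same.
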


If $g\left(  0\right)  =\infty,$ and $\lim_{t\rightarrow0}\sup\frac{g\left(
t\right)  }{f\left(  t\right)  }=\gamma_{0},$ then the supremum of
\[
D_{g}\left(  P,Q\right)  -\gamma D_{f}\left(  P,Q\right)
\]
over all distributions $P,Q$ is $\infty$ if $\gamma<\gamma_{0}.$

If $g^{\ast}\left(  0\right)  =\infty,$ and $\lim_{t\rightarrow\infty}
\sup\frac{g\left(  t\right)  }{f\left(  t\right)  }=\gamma_{0},$ then the
supremum of
\[
D_{g}\left(  Q,P\right)  -\gamma D_{f}\left(  Q,P\right)
\]
over all distributions $P,Q$ is $\infty$ if $\gamma<\gamma_{0}.$

\begin{proof}
Assume that
\[
f\left(  0\right)  =\infty\text{ \ and \ }\lim_{t\rightarrow0}\inf
\frac{g\left(  t\right)  }{f\left(  t\right)  }=\beta_{0}.
\]
The first condition implies
\[
D_{f}\left(  \left(  1,0\right)  ,\left(  1/2,1/2\right)  \right)  =\infty
\]
and the second condition implies that $g\left(  0\right)  =\infty$ and
\[
D_{g}\left(  \left(  1,0\right)  ,\left(  1/2,1/2\right)  \right)  =\infty.
\]
We have
\begin{multline*}
\frac{D_{g}\left(  \left(  p,1-p\right)  ,\left(  1/2,1/2\right)  \right)
}{D_{f}\left(  \left(  p,1-p\right)  ,\left(  1/2,1/2\right)  \right)  }\\
=\frac{g\left(  2p\right)  /2+g\left(  2\left(  1-p\right)  \right)
/2}{f\left(  2p\right)  /2+f\left(  2\left(  1-p\right)  \right)  /2}\\
=\frac{g\left(  2p\right)  +g\left(  2\left(  1-p\right)  \right)  }{f\left(
2p\right)  +f\left(  2\left(  1-p\right)  \right)  }.
\end{multline*}
Let $\left(  t_{n}\right)  _{n}$ be a sequence such that $\frac{g\left(
t_{n}\right)  }{f\left(  t_{n}\right)  }\rightarrow\beta$ for $n\rightarrow
\infty.$ Then
\[
\frac{D_{g}\left(  \left(  \frac{t_{n}}{2},1-\frac{t_{n}}{2}\right)  ,\left(
1/2,1/2\right)  \right)  }{D_{f}\left(  \left(  \frac{t_{n}}{2},1-\frac{t_{n}
}{2}\right)  ,\left(  1/2,1/2\right)  \right)  }\rightarrow\beta
\]
and the first result follows.

The other three cases follows by interchanging $f$ and $g,$ and/or replacing
$f$ by $f^{\ast}$ and $g$ by $g^{\ast}.$ We have used that
\[
\lim_{t\rightarrow0}\inf\frac{g^{\ast}\left(  t\right)  }{f^{\ast}\left(
t\right)  }=\lim_{t\rightarrow0}\inf\frac{tg\left(  t^{-1}\right)  }{tf\left(
t^{-1}\right)  }=\lim_{t\rightarrow\infty}\inf\frac{g\left(  t\right)
}{f\left(  t\right)  }.
\]

\end{proof}

\begin{proposition}
Assume that $f$ and $g$ are $C^{2}$ and that $f^{\prime\prime}\left(
1\right)  >0$ and $g^{\prime\prime}\left(  1\right)  >0.$ Assume that
$\lim_{t\rightarrow0}\inf\frac{g\left(  t\right)  }{f\left(  t\right)  }>0,$
and that $\lim_{t\rightarrow\infty}\inf\frac{g\left(  t\right)  }{f\left(
t\right)  }>0.$ Then there exists $\beta>0$ such that
\begin{equation}
D_{g}\left(  P,Q\right)  \geq\beta\cdot D_{f}\left(  P,Q\right)
\label{nederen}%
\end{equation}
for all distributions $P,Q.$
\end{proposition}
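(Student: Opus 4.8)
The statement asks for a uniform lower bound $D_g \ge \beta D_f$.  By the reduction established earlier (every $(f,g)$-achievable point is a convex combination of two $2$-achievable points, and the joint range is the convex hull of the image of the triangle $\Delta$), it suffices to prove the inequality for $2$-point distributions $P=(1-p,p)$, $Q=(1-q,q)$ with $0\le p\le q\le 1$; a bound that holds on every extreme point of the joint range holds on the whole convex hull, since both $D_f$ and $D_g$ are linear in the barycentric decomposition.  So the plan is: first pass to the two-point parametrization, and show that the ratio $R(p,q) := D_g\big((1-p,p),(1-q,q)\big)\big/D_f\big((1-p,p),(1-q,q)\big)$ is bounded below by a positive constant on the set where $D_f>0$, i.e. off the diagonal $p=q$.

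The set $\{0\le p\le q\le 1,\ p\neq q\}$ is not compact, so I would split it into a compact piece and the approaches to its three ``bad'' boundary strata.  On any compact subset of the open triangle, $D_f$ is continuous and strictly positive (since $f''(1)>0$ forces $D_f>0$ whenever $P\neq Q$, using that $f$ is not affine), and $D_g$ is continuous, so $R$ attains a positive minimum there.  The remaining work is to control $R$ near (i) the diagonal $p=q$, (ii) the corner behaviour $q\to 1$ with $p$ small (where a ratio $q/p$ or similar $\to 0$, so some argument $t=p/q\to 0$), and (iii) the corner where a ratio $\to\infty$.  Near the diagonal I would use the $C^2$ hypothesis: a second-order Taylor expansion of $f$ and $g$ around $t=1$ gives $D_f \approx \tfrac12 f''(1)\,\mathrm{Var}$ and $D_g\approx \tfrac12 g''(1)\,\mathrm{Var}$ of the likelihood ratio, so $R\to g''(1)/f''(1)>0$; this is the standard ``$\chi^2$ dominates all smooth $f$-divergences locally'' computation.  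For the two corners, the hypotheses $\liminf_{t\to 0} g(t)/f(t)>0$ and $\liminf_{t\to\infty} g(t)/f(t)>0$ are exactly what is needed: as $p/q\to 0$ (resp.\ $\to\infty$) the dominant contribution to both $D_f$ and $D_g$ comes from the term $f(p/q)$ (resp.\ $f^\ast(0)(1-p/q)$, which corresponds to the $t\to\infty$ behaviour of $f$), and the ratio is pinned below by those liminf values.  Taking $\beta$ to be the minimum of the compact-set minimum and the three liminf-type constants (all positive) finishes it.

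The main obstacle is the bookkeeping at the corners: one must make precise which term of $D_f$ dominates in each limiting regime and verify that $D_g$'s corresponding term is controlled by the same regime, so that the two liminf hypotheses — one at $0$, one at $\infty$ — really cover all escape routes to the boundary of $\Delta$.  In particular one should note that by the skew-symmetry $D_f(P,Q)=D_{f^\ast}(Q,P)$ and the identity $\liminf_{t\to 0} g^\ast(t)/f^\ast(t)=\liminf_{t\to\infty}g(t)/f(t)$ (used already in the proof of Lemma~\ref{uendelig}), the $t\to\infty$ hypothesis is precisely the $t\to 0$ hypothesis for the dual pair $(f^\ast,g^\ast)$, so the two corner estimates are genuinely the same estimate applied twice.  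Everything else — continuity, compactness, the Taylor expansion near $t=1$ — is routine.  One subtlety worth a sentence: one should confirm $f$ is not affine (else $D_f\equiv 0$ and the claim is vacuous or false), but $f''(1)>0$ already rules that out.
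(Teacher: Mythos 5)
Your route is genuinely different from the paper's, and it is worth seeing why the paper's is both simpler and complete. The paper never reduces to two-point distributions at all. It proves the pointwise inequality $g(t)\geq\beta f(t)$ on all of $(0,\infty)$ by covering the half-line with four pieces: a neighbourhood $(0,t_{0})$ of $0$, where $\liminf_{t\rightarrow0}g/f>0$ gives a constant $\beta_{0}$; a neighbourhood $(t_{\infty},\infty)$, where the hypothesis at $\infty$ gives $\beta_{\infty}$; an interval $\left]  t_{-},t_{+}\right[  $ around $1$, where the Taylor expansions $f(t)=\tfrac{1}{2}f^{\prime\prime}(\theta)(t-1)^{2}$ and $g(t)=\tfrac{1}{2}g^{\prime\prime}(\eta)(t-1)^{2}$ show $g/f\rightarrow g^{\prime\prime}(1)/f^{\prime\prime}(1)>0$; and the compact remainder $[t_{0},t_{-}]\cup\lbrack t_{+},t_{\infty}]$, where continuity gives a positive minimum $\tilde{\beta}$. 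With $\beta$ the minimum of these four constants, the divergence inequality follows immediately by integrating $g(p/q)\geq\beta f(p/q)$ against $Q$ (together with $g^{\ast}(0)\geq\beta f^{\ast}(0)$ for the singular part). No convex-hull theorem, no Carath\'{e}odory, no triangle.

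Your version has a concrete gap at exactly the point you flag as ``bookkeeping.'' Your three regimes --- Taylor at the diagonal, $p/q\rightarrow0$, and $p/q\rightarrow\infty$ --- do not cover all escape routes from the compact core of the triangle. Take $p=cq$ with $c\in(0,1)$ fixed and $q\rightarrow0$: this sequence leaves every compact subset of the open triangle and is ``near the diagonal'' in the Euclidean sense, since $|p-q|\rightarrow0$, yet the likelihood ratio $p/q=c$ stays away from $0$, $1$ and $\infty$, so neither the Taylor expansion at $1$ nor either liminf hypothesis applies; the limit of $D_{g}/D_{f}$ along such a sequence is governed by the ratio of $g$ to $f$ at the fixed argument $c$, not by any of your three limits. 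The corner $(1,1)$ behaves the same way. To get uniformity over all such $c$ you need a positive lower bound for $g(t)/f(t)$ on compact subsets of $(0,\infty)\setminus\{1\}$ --- which is precisely the paper's one-variable compactness step. Once you add that, you have the pointwise bound $g\geq\beta f$ on all of $(0,\infty)$, and the two-point reduction you started with becomes superfluous: the repair subsumes the paper's entire argument, while the heavy machinery you invoke first does no work.
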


\begin{proof}
The inequality $\lim_{t\rightarrow0}\inf\frac{g\left(  t\right)  }{f\left(
t\right)  }>0$ implies that there exist $\beta_{0}$,$t_{0}>0$ such that
$g\left(  t\right)  \geq\beta_{0}f\left(  t\right)  $ for $t<t_{0}.$ The
Inequality $\lim_{t\rightarrow\infty}\inf\frac{g\left(  t\right)  }{f\left(
t\right)  }>0$ implies that there exists $\beta_{\infty}>0$ and $t_{\infty}>0$
such that $g\left(  t\right)  \geq\beta_{\infty}f\left(  t\right)  $ for
$t>t_{\infty.}$ According to Taylor's formula we have
\begin{align*}
f\left(  t\right)   &  =\frac{f^{\prime\prime}\left(  \theta\right)  }
{2}\left(  t-1\right)  ^{2},\\
g\left(  t\right)   &  =\frac{g^{\prime\prime}\left(  \eta\right)  }{2}\left(
t-1\right)  ^{2}%
\end{align*}
for some $\theta$ and $\eta$ between $1$ and $t.$ Hence
\[
\frac{g\left(  t\right)  }{f\left(  t\right)  }=\frac{f^{\prime\prime}\left(
\theta\right)  }{g^{\prime\prime}\left(  \eta\right)  }\rightarrow
\frac{f^{\prime\prime}\left(  1\right)  }{g^{\prime\prime}\left(  1\right)
}\text{ for }t\rightarrow1.
\]
Therefore there there exists $\beta_{1}>0$ and an interval $\left]
t_{-},t_{+}\right[  $ around $1$ such that $\frac{g\left(  t\right)
}{f\left(  t\right)  }\geq\beta_{1}$ for $t\in\left]  t_{-},t_{+}\right[  .$
The function $t\rightarrow\frac{g\left(  t\right)  }{f\left(  t\right)  }$ is
continuous on the compact set $\left[  t_{0},t_{-}\right]  \cup\left[
t_{+},t_{\infty}\right]  $ so it has a minimum $\tilde{\beta}>0$ on this set.
Inequality \ref{nederen} holds for $\beta=\min\left\{  \beta_{0},\beta
_{1},\beta_{\infty},\tilde{\beta}\right\}  .$
\end{proof}

\section{Examples}

In this section we shall see a number of examples of how the method developed
i this paper can be applied to determine the joint range for some pairs of
$f$-divergences. Some of these results are known and others are new. We will
not spell out all the details but shall restrict to the main flow of the
argument that will lead to the joint range.

\subsection{Power divergence of order 2 and 3}

We have
\begin{align*}
f\left(  t\right)   &  =\phi_{2}(t){,}\\
g\left(  t\right)   &  =\phi_{3}(t){.}%
\end{align*}

In this case we have
\begin{gather*}
D_{f}\left(  \left(  p,1-p\right)  ,\left(  q,1-q\right)  \right)
=\ \ \ \ \ \ \ \ \ \ \ \ \ \ \ \ \ \ \ \ \ \ \ \\
\frac{1}{2}\left(  \frac{\left(  p-q\right)  ^{2}}{q}+\frac{\left(
p-q\right)  ^{2}}{1-q}\right)  ,\\
D_{g}\left(  \left(  p,1-p\right)  ,\left(  q,1-q\right)  \right)
=\ \ \ \ \ \ \ \ \ \ \ \ \ \ \ \ \ \ \ \ \ \ \ \\
\ \ \ \ \ \ \ \ \ \ \ \ \ \ \ \ \frac{1}{6}\left(  \left(  \frac{p}{q}\right)
^{3}q+\left(  \frac{1-p}{1-q}\right)  ^{3}\left(  1-q\right)  -1\right)  .
\end{gather*}
First we determine the image of the triangle. The derivatives are
\begin{align*}
\frac{\partial D_{f}}{\partial p}  &  =\frac{2}{2}\cdot\frac{\left(
p-q\right)  }{\left(  1-q\right)  q}~,\\
\frac{\partial D_{f}}{\partial q}  &  =\frac{1}{2}\cdot\frac{\left(
2pq-q-p\right)  \left(  p-q\right)  }{\left(  1-q\right)  ^{2}q^{2}}~,\\
\frac{\partial D_{g}}{\partial p}  &  =\frac{-3}{6}\cdot\frac{\left(
2pq-q-p\right)  \left(  p-q\right)  }{\left(  1-q\right)  ^{2}q^{2}}~,\\
\frac{\partial D_{g}}{\partial q}  &  =\frac{2}{6}\cdot\frac{\left(
\begin{array}
[c]{c}%
pq+p^{2}+q^{2}-\\
3pq^{2}-3p^{2}q+3p^{2}q^{2}%
\end{array}
\right)  \allowbreak\left(  p-q\right)  }{\left(  q-1\right)  ^{3}q^{3}}~.
\end{align*}
The determinant of derivatives is
\begin{multline*}
\left\vert
\begin{array}
[c]{cc}%
\frac{\partial D_{f}}{\partial p} & \frac{\partial D_{g}}{\partial p}\\
\frac{\partial D_{f}}{\partial q} & \frac{\partial D_{g}}{\partial q}%
\end{array}
\right\vert =\\
\frac{\left(  p-q\right)  ^{2}}{12q^{4}\left(  1-q\right)  ^{4}}\left\vert
\begin{array}
[c]{cc}%
2 & 3p+3q-6pq\allowbreak\\
2pq-q-p & \left(
\begin{array}
[c]{c}%
6pq^{2}-2p^{2}-2q^{2}\\
-2pq+6p^{2}q-6p^{2}q^{2}\allowbreak
\end{array}
\right)
\end{array}
\right\vert \\
=-\frac{1}{12}\left(  \frac{p-q}{q\left(  1-q\right)  }\right)  ^{4}.
\end{multline*}

$\allowbreak$We see that the determinant of derivatives is different from zero
for $p\neq q$ so the interior of $\Delta$ is mapped one-to-one to the image.
Hence we just have to determine the image of points on the boundary of
$\Delta$ (or near the boundary if undefined on the boundary).

For $P=\left(  1,0\right)  $ and $Q=\left(  1-q,q\right)  $ we get
\begin{align*}
D_{f}\left(  P,Q\right)   &  =\frac{1}{2}\left(  q+\frac{q^{2}}{1-q}\right)
=\frac{1}{2}\left(  \frac{1}{1-q}-1\right)  ,\\
D_{g}\left(  P,Q\right)   &  =\frac{1}{6}\left(  \frac{1}{\left(  1-q\right)
^{2}}-1\right)  =\frac{1}{6}\frac{\left(  2-q\right)  q}{\left(  1-q\right)
^{2}}.
\end{align*}
The first equation leads to
\[
q=\left(  1-\frac{1}{2D_{f}+1}\right)
\]
and hence
\[
D_{g}=\frac{2}{3}D_{f}\left(  D_{f}+1\right)  .
\]
We have
\[
\frac{f\left(  t\right)  }{g\left(  t\right)  }=\frac{{\frac{t^{2}
-2(t-1)-1}{2}}}{{\frac{t^{3}-3(t-1)-1}{6}}}\rightarrow\infty\text{ for
}t\rightarrow\infty.
\]
All points $\left(  0,s\right)  ,s\in\left[  0,\infty\right)  $ are in the
closure of the range of $\left(  P,Q\right)  \rightarrow\left(  D_{f}
,D_{g}\right)  .$ By combing these two results we see that the range consists
of the point $\left(  0,0\right)  ,$ all points on the curve $\left(
x,\frac{2}{3}x\left(  x+1\right)  \right)  ,x\in\left(  0,\infty\right)  $,
and all point above this curve.

Similar results holds for any pair of power divergences, but for other pairs
than $\left(  D_{2},D_{3}\right)  $ the computations become much more involved.

Note that the R\'{e}nyi divergences are monotone functions of the power
divergences so our results easily translate into the results on R\'{e}nyi
divergences. More details on R\'{e}nyi divergences can be found in
\cite{Erven2010}.

\subsection{Total variation and $\chi^{2}$-divergence}

In this case we have
\begin{align*}
f\left(  x\right)   &  =\left\vert x-1\right\vert ,\\
g\left(  x\right)   &  =\frac{1}{2}\left(  x-1\right)  ^{2}.
\end{align*}
The function $f$ is not differentiable but on the triangle $\Delta$ we have
$p\leq q$ and
\begin{align*}
D_{f}\left(  P,Q\right)   &  =q\left\vert \frac{p}{q}-1\right\vert +\left(
1-q\right)  \left\vert \frac{1-p}{1-q}-1\right\vert \\
&  =2\left(  q-p\right)  .
\end{align*}
Hence $D_{f}\left(  P,Q\right)  $ is $C^{\infty}$ on $\Delta$ although $f$ is
not differentiable. We get
\begin{align*}
\frac{\partial D_{f}}{\partial p} &  =-2~,\\
\frac{\partial D_{f}}{\partial q} &  =2~,\\
\frac{\partial D_{g}}{\partial p} &  =\frac{\left(  p-q\right)  }{\left(
1-q\right)  q}~,\\
\frac{\partial D_{g}}{\partial q} &  =\frac{\left(  2pq-q-p\right)  \left(
p-q\right)  }{2\left(  1-q\right)  ^{2}q^{2}}~.
\end{align*}
Hence%
\begin{align*}
\left\vert
\begin{array}
[c]{cc}%
\frac{\partial D_{f}}{\partial p} & \frac{\partial D_{g}}{\partial p}\\
\frac{\partial D_{f}}{\partial q} & \frac{\partial D_{g}}{\partial q}%
\end{array}
\right\vert  &  =\left\vert
\begin{array}
[c]{cc}%
-2 & 2\\
\frac{\left(  p-q\right)  }{\left(  1-q\right)  q} & \frac{\left(
2pq-q-p\right)  \left(  p-q\right)  }{2\left(  1-q\right)  ^{2}q^{2}}%
\end{array}
\right\vert \\
&  =-2\frac{\left(  q-p\right)  ^{2}\left(  q-1/2\right)  }{\left(
1-q\right)  ^{2}q^{2}}.
\end{align*}
$\allowbreak$The mapping $\Delta$ to the range of $\left(  D_{f},D_{g}\right)
$ is singular for $q=1/2.$ The line $p\rightarrow\left(  p,1/2\right)  $ is
mapped into the curve%
\begin{align*}
p &  \rightarrow\left(  D_{f}\left(  P,Q\right)  ,D_{g}\left(  P,Q\right)
\right)  \\
&  =\left(  2\left(  p-\frac{1}{2}\right)  ,2\left(  p-1/2\right)
^{2}\right)  .
\end{align*}
If the total variation is denoted $V$ this curve satisfies $\chi^{2}=\frac
{1}{2}V^{2}$ and points satisfying $\chi^{2}\geq\frac{1}{2}V^{2}$ are
2-achievable. The inequality $\chi^{2}\geq\frac{1}{2}V^{2}$ has been proved
previously by a different method \cite{Gibbs2002}.

\subsection{Total variation and LeCam divergence}

On the triangle $\Delta$ we have
\begin{align*}
D_{f}\left(  P,Q\right)   &  =2\left(  q-p\right)  ,\\
D_{g}\left(  P,Q\right)   &  =\frac{1}{4}\left(  \frac{\left(  p-q\right)
^{2}}{p+q}+\frac{\left(  p-q\right)  ^{2}}{2-p-q}\right)  .
\end{align*}
The derivatives of the LeCam divergence is
\begin{align*}
\frac{\partial}{\partial p}D_{g}\left(  P,Q\right)   &  =\frac{\left(
p-q\right)  \left(  p+3q-2pq-2q^{2}\right)  }{\left(  p+q\right)  ^{2}\left(
2-p-q\right)  ^{2}},\\
\frac{\partial}{\partial q}D_{g}\left(  P,Q\right)   &  =\frac{\allowbreak
\left(  2pq-q-3p+2p^{2}\right)  \allowbreak\left(  p-q\right)  }{\left(
p+q\right)  ^{2}\left(  p+q-2\right)  ^{2}}.
\end{align*}
Hence%
\begin{multline*}
\left\vert
\begin{array}
[c]{cc}%
\frac{\partial D_{f}}{\partial p} & \frac{\partial D_{g}}{\partial p}\\
\frac{\partial D_{f}}{\partial q} & \frac{\partial D_{g}}{\partial q}%
\end{array}
\right\vert \\
=\left\vert
\begin{array}
[c]{cc}%
-2 & 2\\
\frac{\left(  p-q\right)  \left(  p+3q-2pq-2q^{2}\right)  }{\left(
p+q\right)  ^{2}\left(  2-p-q\right)  ^{2}} & \frac{\left(  2pq-q-3p+2p^{2}%
\right)  \allowbreak\left(  p-q\right)  }{\left(  p+q\right)  ^{2}\left(
p+q-2\right)  ^{2}}%
\end{array}
\right\vert \\
=\frac{4\left(  1-p-q\right)  \left(  q-p\right)  ^{2}}{\left(  p+q\right)
^{2}\left(  p+q-2\right)  ^{2}}.
\end{multline*}
The mapping is singular for $q=1-p.$ We get the curve%
\begin{align*}
p &  \rightarrow\left(  2\left(  p-\left(  1-p\right)  \right)  ,\frac{\left(
p-\left(  1-p\right)  \right)  ^{2}}{p+\left(  1-p\right)  }+\frac{\left(
p-\left(  1-p\right)  \right)  ^{2}}{2-p-\left(  1-p\right)  }\right)  \\
&  =\left(  4\left(  p-\frac{1}{2}\right)  ,2\left(  p-\frac{1}{2}\right)
^{2}\right)  .
\end{align*}
If total variation is denoted $V$ then the curve is $D_{g}=\frac{1}{8}V^{2}$
and any point above this curve is achievable.

\subsection{Information divergence and reversed information divergence}

In this case we have%
\begin{align*}
f\left(  t\right)   &  =t\ln t,\\
g\left(  t\right)   &  =-\ln t.
\end{align*}
We see that $g\left(  0\right)  =\infty$ and that $\frac{g\left(  t\right)
}{f\left(  t\right)  }\rightarrow\infty$ for $t\rightarrow0.$ Lemma
\ref{uendelig} implies that the supremum of
\[
D_{g}\left(  P,Q\right)  -\gamma D_{f}\left(  P,Q\right)  =D\left(  Q\Vert
P\right)  -\gamma D\left(  P\Vert Q\right)
\]
over all distributions $P,Q$ is $\infty$ for any $\gamma<\infty.$ Similarly
the supremum of
\[
D\left(  P\Vert Q\right)  -\gamma D\left(  Q\Vert P\right)
\]
over all distributions $P,Q$ is $\infty$ for any $\gamma<\infty.$ Since
$\left(  0,0\right)  $ is in the range and the range is convex, the range
consist of all interior points of the first quadrant and the point $\left(
0,0\right)  .$

\section{Acknowledgement}

The authors thank Job Bri\"{e}t and Tim van Erven for comments to a draft of
this paper.

This work was supported by the European Network of Excellence and the GA\v{C}R
grants 102/07/1131 and 202/10/0618.

\setlength{\itemsep}{5pt}

\bibliographystyle{ieeetr}
\bibliography{database1}

\end{document}